\documentclass[a4paper,reqno,twoside]{amsart}

\usepackage[%
pdfauthor={Silvio Barandun,%
            },%
pdftitle={Anderson transition in disordered Hatano-Nelson systems},
]{hyperref}

\usepackage[left=3.5cm,right=3.5cm,top=3.5cm,bottom=3.5cm,footskip=1.5cm,headsep=1cm,headheight=12pt,bindingoffset=0.cm,]{geometry}

\DeclareMathOperator{\R}{\mathbb{R}}
\DeclareMathOperator{\C}{\mathbb{C}}
\DeclareMathOperator{\sgn}{sign}

\renewcommand{\i}{\mathbf{i}}
\renewcommand{\tilde}{\widetilde}
\renewcommand{\hat}{\widehat}
\renewcommand{\bar}[1]{\overline{#1}}

\renewcommand{\qedsymbol}{$\blacksquare$}

\renewcommand{\epsilon}{\varepsilon}
\DeclareMathOperator{\dd}{d\!}

\renewcommand{\i}{\mathbf{i}}
\renewcommand{\tilde}{\widetilde}
\renewcommand{\hat}{\widehat}

\usepackage{tcolorbox}
\usetikzlibrary{tikzmark,calc}

\usepackage{tikz}

\usetikzlibrary{matrix} 
\usetikzlibrary{arrows,arrows.meta,bending} 

\usepackage{nicematrix}
\usepackage{calc} 

\DeclareMathOperator{\arcosh}{arcosh}

\newcommand{\hhn}{H_{\text{HN}}}
\DeclareMathOperator*{\var}{Var}
\DeclareMathOperator*{\E}{\mathbb{E}}

\usepackage[utf8]{inputenc}
\usepackage[english]{babel}
\usepackage[T1]{fontenc} 
\usepackage{lmodern} 
\rmfamily 
\DeclareFontShape{T1}{lmr}{b}{sc}{<->ssub*cmr/bx/sc}{}
\DeclareFontShape{T1}{lmr}{bx}{sc}{<->ssub*cmr/bx/sc}{}

\numberwithin{equation}{section}
\usepackage{amsmath}
\usepackage{amssymb}
\usepackage{amsthm}
\usepackage{amsfonts}
\usepackage{mathtools}

\usepackage{mathdots}
\usepackage{caption}
\usepackage{subcaption}

\usepackage{dsfont} 
\usepackage[format=hang]{caption}
\usepackage[normalem]{ulem}

\usepackage{standalone}
\usepackage{graphicx}
\usepackage{imakeidx}
\makeindex
\usepackage{setspace}
\usepackage{appendix}
\usepackage{pdfpages}
\usepackage{upgreek}
\usepackage{tabulary}
\usepackage{booktabs}
\usepackage{extarrows}
\usepackage{tabularx}
\usepackage{float}
\restylefloat{table}
\usepackage{enumitem}
\usepackage{csquotes}
\usepackage{bm}
\usepackage{orcidlink}
\usepackage{lipsum}
\usepackage{nicematrix}

\usepackage{tikz}
\usepackage{tikz-layers}
\usepackage{tikz-cd}
\usepackage[arrow, matrix, curve]{xy}
\usetikzlibrary{matrix,positioning,decorations.pathreplacing,calc}
\usetikzlibrary{
    decorations.text,%
    decorations.markings,%
    shadows}
\usetikzlibrary{calc,intersections}
\usepackage{adjustbox}
\usepackage{graphicx}

\usepackage[
    style=numeric,
    sorting=nty,
    maxnames=99,
    maxalphanames=5,
    natbib=true,
    backend=bibtex,
    sortcites]{biblatex}

\DeclareNameAlias{default}{family-given}

\AtEveryBibitem{
    \clearfield{url}
    \clearfield{issn}
    \clearfield{isbn}
    \clearfield{urldate}

    \ifentrytype{book}{
        \clearfield{pages}}{%
    }
}

\usepackage{xargs}                      
\usepackage[prependcaption,textsize=tiny,textwidth=3cm]{todonotes}
\newcommandx{\unsure}[2][1=]{\todo[linecolor=red,backgroundcolor=red!25,bordercolor=red,#1]{#2}}
\newcommandx{\change}[2][1=]{\todo[linecolor=blue,backgroundcolor=blue!25,bordercolor=blue,#1]{#2}}
\newcommandx{\info}[2][1=]{\todo[linecolor=OliveGreen,backgroundcolor=OliveGreen!25,bordercolor=OliveGreen,#1]{#2}}
\newcommandx{\improvement}[2][1=]{\todo[linecolor=black,backgroundcolor=black!25,bordercolor=black,#1]{#2}}
\newcommandx{\thiswillnotshow}[2][1=]{\todo[disable,#1]{#2}}

\usepackage{aliascnt}
\usepackage[noabbrev,capitalize]{cleveref}
\crefname{proposition}{Proposition}{Propositions}
\crefname{equation}{}{}

\newtheorem{theorem}{Theorem}[section]
\newaliascnt{lemma}{theorem}

\aliascntresetthe{lemma}
\newaliascnt{proposition}{theorem}
\newtheorem{proposition}[proposition]{Proposition}
\aliascntresetthe{proposition}
\newaliascnt{corollary}{theorem}

\aliascntresetthe{corollary}
\newaliascnt{conjecture}{theorem}

\aliascntresetthe{conjecture}

\theoremstyle{definition}
\newaliascnt{definition}{theorem}
\newtheorem{definition}[definition]{Definition}
\aliascntresetthe{definition}
\newaliascnt{example}{theorem}

\aliascntresetthe{example}
\newaliascnt{assumption}{theorem}

\aliascntresetthe{assumption}
\newaliascnt{remark}{theorem}

\aliascntresetthe{remark}

\crefname{assumption}{Assumption}{Assumptions}
\crefname{definition}{Definition}{Definitions}
\Crefname{definition}{Definition}{Definitions}
\crefname{corollary}{Corollary}{Corollaries}
\crefname{enumi}{item}{items}

\creflabelformat{subfigure}{#2\textsc{#1}#3}

\usepackage[final]{microtype}

\makeatletter
\newsavebox\myboxA
\newsavebox\myboxB
\newlength\mylenA

\newcommand*\xoverline[2][0.75]{%
  \sbox{\myboxA}{$\m@th#2$}%
  \setbox\myboxB\null
  \ht\myboxB=\ht\myboxA%
  \dp\myboxB=\dp\myboxA%
  \wd\myboxB=#1\wd\myboxA
  \sbox\myboxB{$\m@th\overline{\copy\myboxB}$}
  \setlength\mylenA{\the\wd\myboxA}
  \addtolength\mylenA{-\the\wd\myboxB}%
  \ifdim\wd\myboxB<\wd\myboxA%
    \rlap{\hskip 0.5\mylenA\usebox\myboxB}{\usebox\myboxA}%
  \else
    \hskip -0.5\mylenA\rlap{\usebox\myboxA}{\hskip 0.5\mylenA\usebox\myboxB}%
  \fi}
\makeatother

\usepackage{fancyhdr}

\fancypagestyle{plain}{%
    \fancyhead[C]{}
    \fancyfoot[C]{}
    
    }
\fancypagestyle{nosection}{%
    \fancyhead[CE]{}
    \fancyhead[CO]{}
    \fancyfoot[CE,CO]{\thepage}
}

\newcommandx{\silvio}[2][1=]{\todo[linecolor=blue,backgroundcolor=blue!25,bordercolor=blue,#1]{Silvio: #2}}

\title{Anderson transition in disordered Hatano-Nelson systems}

\begin{document}

\author[S. Barandun]{Silvio Barandun\,\orcidlink{0000-0003-1499-4352}}
\thanks{Massachusetts Institute of Technology, Department of Mathematics, United States of America, \href{http://orcid.org/0000-0003-1499-4352}{orcid.org/0000-0003-1499-4352}}
  \address{\parbox{\linewidth}{Silvio Barandun\\
 Massachusetts Institute of Technology, Department of Mathematics, Simons Building (Building 2), 77 Massachusetts Avenue, Cambridge, MA 02139-4307, United States of America, \href{http://orcid.org/0000-0003-1499-4352}{orcid.org/0000-0003-1499-4352}}.}
 \email{barandun@mit.edu}

\begin{abstract}
    We illuminate the fundamental mechanism responsible for the transition between the non-Hermitian skin effect and defect-induced Anderson localization in the bulk via the study of Lyapunov exponents. We obtain a proof that the change of the topological invariant associated with an eigenvalue coincides with the eigenvector crossover from non-Hermitian skin effect to Anderson localization, establishing a universal criterion for localization behavior.
\end{abstract}

\maketitle






\section{Introduction}\label{sec:intro}
In this article we study the transition between two localization regimes: the non-Hermitian skin effect (NHSE) and Anderson localization. Anderson localization describes the phenomenon of localization due to disorder. Firstly introduced by the seminal paper \cite{anderson1958Absence}, the study of disorder induced localization has rapidly developed becoming one of the hottest topics in mathematical research. On the other hand, the NHSE describes the localization phenomena whereas all eigenmodes of a system are ``condensed'', that is localized at one edge of the system. Firstly introduced by \citeauthor{hatano.nelson1996Localization} \cite{hatano.nelson1996Localization,hatano.nelson1998NonHermitian} in \citeyear{hatano.nelson1996Localization} as a non-Hermitian equivalent of Anderson localization, this kind of non-Hermitian localization has mostly moved away from the study of random or disordered system due to the rich mathematical theory for non-Hermitian crystalline or periodic system \cite{Yao_2018,borgnia.kruchkov.ea2020Nonhermitian,li.lee.ea2020Critical,lin.tai.ea2023Topological,maddi.auregan.ea2024Exact,zhang.zhang.ea2022Review,ammari.barandun.ea2024Mathematical}. Specifically, the link to topological invariants that uniquely describe the localization behavior has overwhelmingly dominated the subject in the past decade.

We present here for the first time full characterization of the transition from a NHSE dominated localization in a crystalline material to an Anderson disordered system via \emph{a priori}\footnote{That is, of the unperturbed structure.} information.
\begin{figure}[h]
    \includegraphics[width=0.5\textwidth]{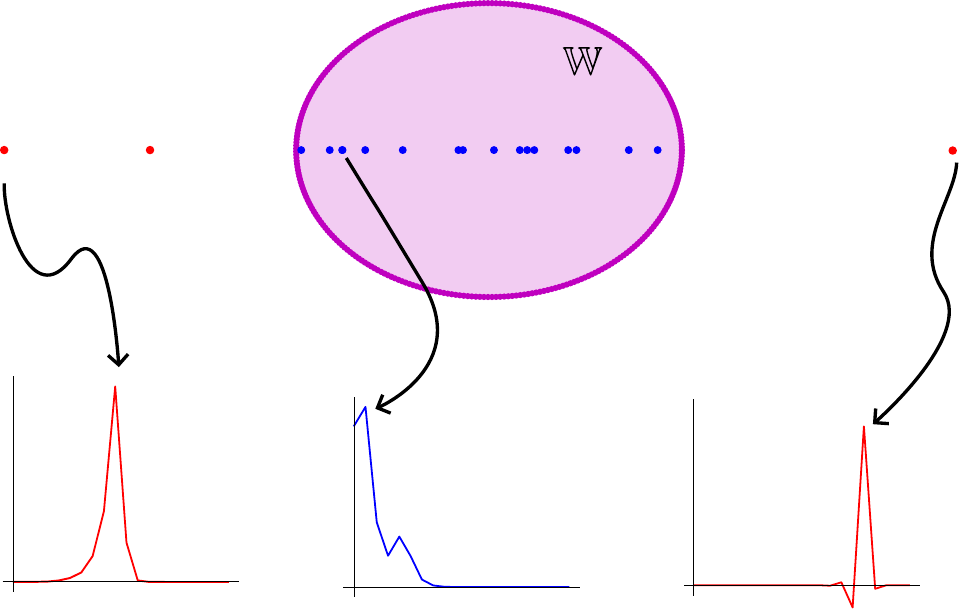}
    \caption{The transition of an eigenvalue from inside to outside the topological winding region $\mathbb{W}$ coincides with the eigenvector crossover from non-Hermitian skin effect to Anderson localization, establishing a universal criterion for localization behavior.}
    \label{fig:intro}
\end{figure}
As mentioned in the previous paragraph, the Hatano-Nelson (HN) system was initially conceived in a disordered setting and research has been conducted in this view \cite{goldsheid.khoruzhenko1998Distribution,trefethen.embree2005Spectra,Hatano_2021,Fortin_2025,Wang_2025}. However, we are, to the best of our knowledge, the first to link via the \emph{a priori} topological invariant the transition from one regime to the other in the fully disordered case. We want to mention the work \cite{trefethen.embree2005Spectra, trefethen.contedini.ea2001Spectra} which introduces useful \emph{a posteriori}\footnote{That is, of an already perturbed structure.} classification of the eigenvalues into four groups. This work is probably the closest in scope to ours. Previous work has provided a first insight into this precise topic \cite{davies.barandun.ea2025Twoscale}, but limited to a single defect. The main idea resides in \cref{fig:intro} and was conceived as a conjecture during the preparation of the author's earlier paper \cite{ammari.barandun.ea2024Stability}.

In broad terms we conjectured that the transition of an eigenvector from NHSE to Anderson localization happens at the same time as the transition of the associated eigenvalue from inside to outside of the topological region of the unperturbed structure ($\mathbb{W}$ in \cref{fig:intro}). We are able to provide quantitative results proving this statement. It is interesting to notice that there is a fundamental difference between Hermitian and non-Hermitian system: while in Hermitian system any (also asymptotically small) non-zero noise induces Anderson localization, for non-Hermitian system there is a non-zero minimum noise amplitude needed to induce Anderson localization, due to the stability of the NHSE \cite{ammari.barandun.ea2024Stability}. Our result proves the existence and quantifies this minimal value.

The paper is structured as follows:
\begin{description}
    \item[\cref{sec:not}] introduces the key objects of consideration of the paper, that is Lyapunov exponents and the Hatano-Nelson system;
    \item[\cref{sec:Lloyd model}] analyzes the Lloyd model and provides with \cref{thm: main Lloyd} our strongest result;
    \item[\cref{sec:gen asymptotic}] looks at the more general case in the weak-disorder regime. 
\end{description}

\section{Notation}\label{sec:not}
The main object of study of this paper is the HN Hamiltonian
\begin{align}
    \hhn^\gamma = \sum_i e^{-\gamma}\vert i+1\rangle\langle i\vert + e^{\gamma}\vert i\rangle\langle i+1\vert + V_i \vert i\rangle\langle i\vert \label{eq: HN Hamiltonian}
 \end{align}
where $V_i$ are \emph{i.i.d.} random variables with a given distribution.

The standard tool used to understand localization in tight-binding Hamiltonians as \eqref{eq: HN Hamiltonian} (especially when $\gamma=0$) is the \emph{Lyapunov exponent}. The Lyapunov exponent is typically introduced via transfer matrices.
\begin{definition}[Lyapunov exponent]\label{def: lyapunov}
    Considering the tridiagonal matrix
    \begin{align*}
        J = \begin{pmatrix}
        \ddots & \ddots\\
        \ddots &\ddots &\ddots \\
        & a_{-1} & b_{-1} & c_{-1}\\
        && a_0 & b_0 & c_0\\
        &&& a_{1} & b_{1} & c_{1}\\
        &&&&\ddots &\ddots &\ddots \\
        &&&&&\ddots &\ddots &
    \end{pmatrix},
    \end{align*}
    the $j$-th associated transfer matrix is given by
    \begin{align*}
        t_j(z) =\begin{pmatrix}
            \frac{z - b_j}{c_j}& -\frac{a_j}{c_j}\\
            1 & 0
        \end{pmatrix}.
    \end{align*}
    The Lyapunov exponent at frequency $z\in \C$ is then given by the limit
    \begin{align}
        L(z) = \lim_{n\to\infty}\frac{1}{2n}\log\left(\left\Vert \prod_{\vert i\vert < n } t_j(z)\right\Vert\right).
        \label{eq: def laypunov}
    \end{align}
\end{definition}
A series of results ensures the existence of the Lyapunov exponent in a variety of cases, see \cite[Chapter V]{pastur.figotin1992Spectra} for a summary. In particular, these studies have been overwhelmingly focused to the Hermitian case, that is $a_i=\bar{c_i}$. This is not the case of $\hhn$ of \eqref{eq: HN Hamiltonian}. However, \cref{def: lyapunov} works perfectly for any tridiagonal matrix and we will use this definition also for $\hhn$.

In the case of random matrices, it is typical to take expected value
\begin{align*}
    L(z) = \lim_{n\to\infty}\frac{1}{2n}\mathbb{E}\left[\log\left(\left\Vert \prod_{\vert i\vert < n } t_j(z)\right\Vert\right)\right],
\end{align*}
in the same notation as \cref{def: lyapunov}

One might notice that using the well-known similarity of a tridiagonal matrix $J$ as in \cref{def: lyapunov} to a symmetric matrix (under the assumption that $a_ic_{i-1}>0$, which we implicitly make) the Lyapunov exponent behaves as follows:
\begin{align}
    L_\gamma(z) =  L_0(z) - \gamma,
    \label{eq: layp herm to not herm}
\end{align}
where $ L_\gamma(z)$ is the Lyapunov exponent associated to $\hhn^\gamma$. This can be seen directly by looking at propagation matrices as in \cite{ammari.barandun.ea2025Competing}

The sign of the Lyapunov exponent determines the localization behavior of the associated eigenvectors. For the non-Hermitian $\gamma\neq 0$ case, the Lyapunov exponent can take negative values due to the $-\gamma$ shift in \eqref{eq: layp herm to not herm}. When $L_\gamma(z) < 0$, the transfer matrix products in \cref{def: lyapunov} decay exponentially, implying that solutions to the eigenvalue equation grow in one direction and decay in the opposite direction, which is the precise signature of the NHSE. The negative Lyapunov exponent quantifies the rate at which this boundary localization occurs.

Conversely, when $L_\gamma(z) > 0$, the transfer matrices grow exponentially in both directions, forcing eigenvectors to decay away from any given point to maintain normalizability. This is precisely the mechanism of Anderson localization, where disorder-induced scattering causes eigenstates to localize around random positions in the bulk of the system, with the Lyapunov exponent measuring the inverse localization length.

In the case of constant deterministic potentials, that is $V_i=V\in \R$, the infinite matrix associated to \eqref{eq: HN Hamiltonian} is Toeplitz making it a \emph{Laurent operator}. There is a rich theory associated to Toeplitz matrices, operators and Laurent operators \cite{trefethen.embree2005Spectra} which lies at the core of the topological invariants of non-Hermitian operators as $\hhn$. In particular one might associate to $\hhn^\gamma$ a \emph{symbol function}
\begin{align*}
    f_\gamma : S^1 &\to \C\\
        z &\mapsto e^{\gamma}z + V + e^{-\gamma}z^{-1}
\end{align*}
for $S^1\subset \C$ the unit circle. A well-known result associates the winding region of $f_\gamma$
\begin{align}
    \mathbb{W}\coloneqq \left\{z\in\C: \frac{1}{2\pi \i} \int_{f_\gamma} \frac{1}{\zeta-z} \dd \zeta \neq 0\right\},
    \label{eq: winding region}
\end{align}
to an exponential decay of the eigenvectors of $\hhn$ \cite[Theorem 7.2]{trefethen.embree2005Spectra}. Specifically it says that for a finite truncation of $\hhn$ of size $N$ any value $\lambda \in \mathbb{W}$ is an exponentially (in the size of the truncation) good pseudo-eigenvalue and its associated eigenvector is exponentially decaying, in formulas: there exits a $\epsilon >0$ and $N$ large enough
\begin{align*}
    \Vert (\lambda - \hhn\vert_{N\times N})^{-1}\Vert \geq \epsilon^{-N}
\end{align*}
where $\hhn\vert_{N\times N}\in \R^{N\times N}$ is the central truncation of $\hhn$. Furthermore there exits a $v\in\R^N$ of unit length $\Vert v\Vert =1$ so that
\begin{align*}
    \Vert (\hhn\vert_{N\times N} - \lambda) v\Vert \leq \epsilon^N
\end{align*}
and 
\begin{align*}
    \frac{\vert v_j \vert}{\max_j \vert v_j \vert }\leq \begin{dcases}
        \epsilon^j&\text{if}\quad \frac{1}{2\pi \i} \int_{f_\gamma} \frac{1}{\zeta-\lambda} \dd \zeta <0\\
        \epsilon^{N-j}&\text{if}\quad \frac{1}{2\pi \i} \int_{f_\gamma} \frac{1}{\zeta-\lambda} \dd \zeta >0
    \end{dcases}.
\end{align*}

\section{The non-Hermitian Lloyd Model}\label{sec:Lloyd model}
We consider now a HN Hamiltonian with \emph{i.i.d.} Cauchy distributed potentials with zero location (WLOG), that is
\begin{align}
    \mathbb{P}[V_0 \in \Omega] = \frac{1}{\pi s}\int_{\Omega}\frac{1}{1+\frac{x^2}{s^2}}\dd x.\label{eq: cauchy pdf}
\end{align}
\cref{fig: cdfs} shows the cumulative distribution function of \cref{eq: cauchy pdf} for some values of $s$.

\begin{figure}
    \includegraphics[width=0.5\textwidth]{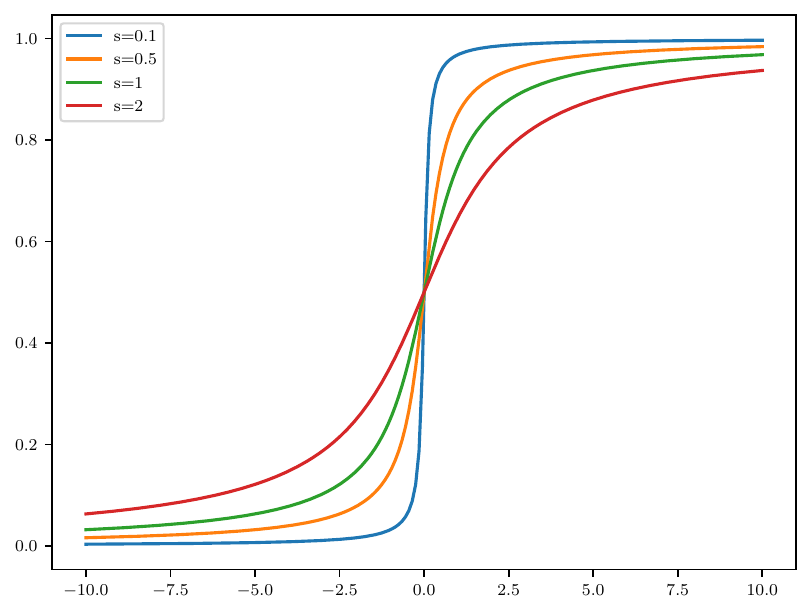}
    \caption{Cumulative distribution function for the Cauchy distribution for different scale parameters $s$.}
    \label{fig: cdfs}
\end{figure}

This model has been studied in the Hermitian case ($\gamma=0$) already 25 years before \cite{hatano.nelson1996Localization} by \citeauthor{lloyd1969Exactly} in \cite{lloyd1969Exactly}. We condense results from \cite{lloyd1969Exactly,thouless1972Relation}.

Let $\rho(z)$ be the density of states of the Hamiltonian $\hhn^0$ and define the complexified Lyapunov exponent
\begin{align}
    w(z) = \int_{-\infty}^\infty \rho(z)\log( z-x)\dd x,
    \label{eq: complexified layp}
\end{align}
which by the Thouless formula \cite{goldsheid.khoruzhenko2005Thouless} satisfies $\Re(w(z)) = L_0(z)$. The complexified Lyapunov exponent computed for the Lloyd model satisfies
\begin{align*}
    \cosh(w(z)) = \frac{1}{4}\left(\sqrt{(z+2)^2+s^2}+\sqrt{(z-2)^2+s^2}\right)
\end{align*}
which is well-defined because the Lyapunov exponent is non-negative for Hermitian matrices. Some careful algebraic manipulation can transform this into
\begin{align}
    L_0(z) = \Re\left(\arcosh\left(\frac{z+\i s}{2}\right)\right).
    \label{eq: L0 Lloyd}
\end{align}

It remains to analyze the relation between Lyapunov exponent and $\mathbb{W}$. We recall from \eqref{eq: winding region} that 
\begin{align*}
    \mathbb{W} = \left\{z\in\C: \frac{1}{2\pi \i} \int_{f_\gamma} \frac{1}{\zeta-z} \dd \zeta \neq 0\right\}
\end{align*}
where $f_\gamma$ is the symbol function associated to $\hhn$, so $f_\gamma: z\mapsto e^\gamma z + e^{-\gamma}z^{-1}$ for $z\in\C$. It is clear that $\mathbb{W}$ is an ellipse with vertices at $\pm (e^\gamma + e^{-\gamma}) = \pm 2\cosh(\gamma)$. The first thing to realize is that \eqref{eq: L0 Lloyd} is symmetric with respect to $x=0$, that is $L_0(-x)=L_0(x)$ for any real $x$. On the other hand the original formula for $w(z)$ directly shows that
\begin{align*}
    \sgn\frac{\dd L_0}{\dd x}(x) = \sgn x.
\end{align*}
A Taylor expansion of \eqref{eq: L0 Lloyd} in $s$, considering the necessary branch cuts yields
\begin{align}
    L_0(x)=\begin{dcases}
        \arcosh(\vert x/2\vert) + \frac{x}{2(x^2-4)^{3/2}}s^2+\mathcal{O}(s^3), &\vert x \vert > 2\\
        0 + \frac{1}{\sqrt{4-x^2}}s+\mathcal{O}(s^4), &\vert x \vert < 2.
    \end{dcases}
    \label{eq: Lyap lloyd in out}
\end{align}

The last two observations prove the following result.
\begin{theorem}\label{thm: main Lloyd}
    Let $\hhn$ be a Hatano-Nelson Hamiltonian as in \eqref{eq: HN Hamiltonian} with \emph{i.i.d.} Cauchy distributed potentials of scale $s\in\R_{>0}$. Then the Lyapunov exponent associated to a frequency $\lambda \in \R$ satisfies
    \begin{align}
        \lambda \in \mathbb{W}^{o} \Longrightarrow L_\gamma(\lambda) < 0\nonumber \\
        \lambda \not\in \overline{\mathbb{W}} \Longrightarrow L_\gamma(\lambda) > 0
        \label{eq: thm lloyd}
    \end{align}
    up to an error that behaves linearly in $s$ for $\lambda \in \mathbb{W}$ and quadratically for $\lambda \not\in \mathbb{W}$ as $s\to 0$. More precisely this means that
    \begin{align*}
        L_\gamma(\lambda) = \tilde{L}_\gamma(\lambda) + C(s)
    \end{align*}
    where \eqref{eq: thm lloyd} holds for $\tilde{L}_\gamma(\lambda)$ and $\vert C(s)\vert =\mathcal{O}(s)$ for $\lambda \in \mathbb{W}^{o}$ and $\vert C(s)\vert =\mathcal{O}(s^2)$ for $\lambda \not\in \overline{\mathbb{W}}$ as $s\to 0$..
\end{theorem}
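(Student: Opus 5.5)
We combine the shift identity \eqref{eq: layp herm to not herm}, $L_\gamma(\lambda)=L_0(\lambda)-\gamma$, with the small-$s$ expansion \eqref{eq: Lyap lloyd in out} of the Lloyd exponent \eqref{eq: L0 Lloyd}. The candidate leading term is
\begin{align*}
    \tilde L_\gamma(\lambda) = \Re\arcosh(\lambda/2) - \gamma ,
\end{align*}
the Lyapunov exponent of the clean Hatano--Nelson operator. Then $C(s)=L_0(\lambda)-\Re\arcosh(\lambda/2)$ is exactly the correction term in \eqref{eq: Lyap lloyd in out}. It is $\mathcal{O}(s)$ for $\vert\lambda\vert<2$, and $\mathcal{O}(s^2)$ for $\vert\lambda\vert>2$ at fixed $\lambda$ away from $\pm2$.

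It remains to check that $\tilde L_\gamma$ satisfies \eqref{eq: thm lloyd}, which amounts to identifying $\mathbb{W}\cap\R$. Since $f_\gamma(e^{\i\theta})=2\cosh\gamma\cos\theta+2\i\sinh\gamma\sin\theta$, for $\gamma\neq0$ the curve is an ellipse traversed once. Its winding number about any point of the open interior is $\pm1$, so
\begin{align*}
    \mathbb{W}^o\cap\R=(-2\cosh\gamma,2\cosh\gamma).
\end{align*}
We may take $\gamma>0$, since $\gamma<0$ is symmetric; for $\gamma=0$ the set $\mathbb{W}$ is empty and the statement is vacuous.

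We then split into three regions.
\begin{itemize}
    \item If $\vert\lambda\vert\le 2$, then $\tilde L_\gamma(\lambda)=-\gamma<0$, and $\lambda\in\mathbb{W}^o$ because $2<2\cosh\gamma$.
    \item If $2<\vert\lambda\vert<2\cosh\gamma$, then $\arcosh(\vert\lambda\vert/2)<\gamma$ by strict monotonicity of $\arcosh$, so $\tilde L_\gamma<0$.
    \item If $\vert\lambda\vert>2\cosh\gamma$, the same monotonicity gives $\tilde L_\gamma>0$.
\end{itemize}
The evenness $L_0(-x)=L_0(x)$ and the sign property $\sgn L_0'(x)=\sgn x$ noted before the theorem make the threshold argument clean: $\tilde L_\gamma$ is even, equal to $-\gamma$ on $[-2,2]$, and strictly increasing in $\vert\lambda\vert$ beyond that. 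Its unique zeros are therefore exactly $\pm2\cosh\gamma$, which are the vertices of $\mathbb{W}$.

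The error orders need care in the middle region $2<\vert\lambda\vert<2\cosh\gamma$. There $\lambda\in\mathbb{W}$ but the $\vert x\vert>2$ branch of \eqref{eq: Lyap lloyd in out} applies, so the error is in fact $\mathcal{O}(s^2)$. This is consistent with the claimed $\mathcal{O}(s)$, so the statement still holds with the weaker bound.

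The main obstacle is justifying \eqref{eq: Lyap lloyd in out} from \eqref{eq: L0 Lloyd} with the correct branch of $\arcosh$. I would write $\arcosh(\zeta)=\log(\zeta+\sqrt{\zeta^2-1})$ with $\zeta=(\lambda+\i s)/2$, choosing the branch so that $\vert\zeta+\sqrt{\zeta^2-1}\vert\ge1$; this is consistent with $L_0\ge0$. Then expand in $s$.
\begin{itemize}
    \item For $\vert\lambda\vert>2$, $\sqrt{\zeta^2-1}$ is analytic near the real value. The first-order term is purely imaginary, so it drops out of the real part, which leaves an $s^2$ correction.
    \item For $\vert\lambda\vert<2$, $\zeta^2-1$ is near the negative axis and $\sqrt{\zeta^2-1}\approx \i\sqrt{1-\lambda^2/4}$. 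Then $\vert\zeta+\sqrt{\zeta^2-1}\vert=1+s/\sqrt{4-\lambda^2}+\dots$, giving the linear term.
\end{itemize}
Finally, the $\mathcal{O}$-constants blow up as $\vert\lambda\vert\to2$, so the statement is understood pointwise in $\lambda$ with $s\to0$.
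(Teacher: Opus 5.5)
For $\gamma>0$ your argument is the paper's own: the shift $L_\gamma=L_0-\gamma$, the small-$s$ expansion of $\Re\arcosh((\lambda+\i s)/2)$, evenness and monotonicity, and the vertices $\pm2\cosh\gamma$ of the ellipse. It is correct for $\lambda\neq\pm2$. Your branch discussion, and your observation that the error is really $\mathcal{O}(s^2)$ on $2<|\lambda|<2\cosh\gamma$, are sharper than what the paper writes. However, three of your peripheral claims fail. In each case the defect lies in the scope of the statement, not in something a better argument could fix.

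First, $\gamma<0$ is not symmetric for the quantity in the theorem. With the paper's definition (the top exponent of the forward transfer-matrix product), $L_\gamma(\lambda)=L_0(\lambda)+|\gamma|\geq|\gamma|>0$ for every real $\lambda$. Meanwhile $\mathbb{W}$ is the same open ellipse, now with winding number $-1$. So the first implication fails on all of $\mathbb{W}^{o}\cap\R$, for every $s$. The lattice reflection $j\mapsto -j$ does exchange $\gamma$ and $-\gamma$, but it does not preserve $L_\gamma$. For $\gamma<0$ the skin-effect criterion is the sign of the lower exponent $-L_0-\gamma$, i.e. $L_0<|\gamma|$. You must assume $\gamma>0$, as the paper tacitly does.

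Second, $\gamma=0$ is not vacuous. If $\mathbb{W}=\emptyset$ then $\overline{\mathbb{W}}=\emptyset$, so the second implication is asserted for \emph{every} real $\lambda$. Your $\tilde L_0=\Re\arcosh(\lambda/2)$ vanishes on $[-2,2]$, so this case must be excluded as well.

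Third, for $\gamma>0$ the points $\lambda=\pm2$ belong to $\mathbb{W}^{o}$, but neither branch of the expansion applies there. In fact $C(s)=L_0(\pm2)=\Re\arcosh(1+\i s/2)=\sqrt{s/2}+\mathcal{O}(s^{3/2})$. The sign conclusion survives, but with your $\tilde L_\gamma$ the claimed $\mathcal{O}(s)$ bound is false at these two points. Reading the statement pointwise in $\lambda$ does not help. The error should be stated as $\mathcal{O}(s)$ on $\mathbb{W}^{o}\setminus\{\pm2\}$ and $\mathcal{O}(s^{1/2})$ at $\pm2$.
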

We remark at this point that \cref{thm: main Lloyd} characterizes exactly the transition that we first introduced in \cref{fig:intro} up to the behavior of the boundary of $\mathbb{W}$ (which is anyway of zero-measure). Furthermore, \eqref{eq: Lyap lloyd in out} provides an error estimation based on the scale of the parameter of the Cauchy distribution. This can be interpreted as a ``thickening'' of the boundary of $\mathbb{W}$ where no statement is made.
\subsection{Numerical simulations}
We consider now a couple of experiments regarding \cref{thm: main Lloyd}. In \cref{fig: Layp_many} we compute the Lyapunov exponent $L_\gamma(\lambda)$ for many realizations for $4$ different scale $s=0.1,0.5,1.0,2.0$ (in \cref{fig: cdfs} we show the cumulative distribution function for these parameters).
\cref{fig: Layp_many} clearly shows that the phase transition point $L_\gamma(\lambda)<0$ to $L_\gamma(\lambda)>0$ happens at the transition out of $\mathbb{W}$. The less the noise, the more accurate is this identification.
\begin{figure}
    \begin{subfigure}[t]{0.49\textwidth}
    \centering
\includegraphics[width=1\textwidth]{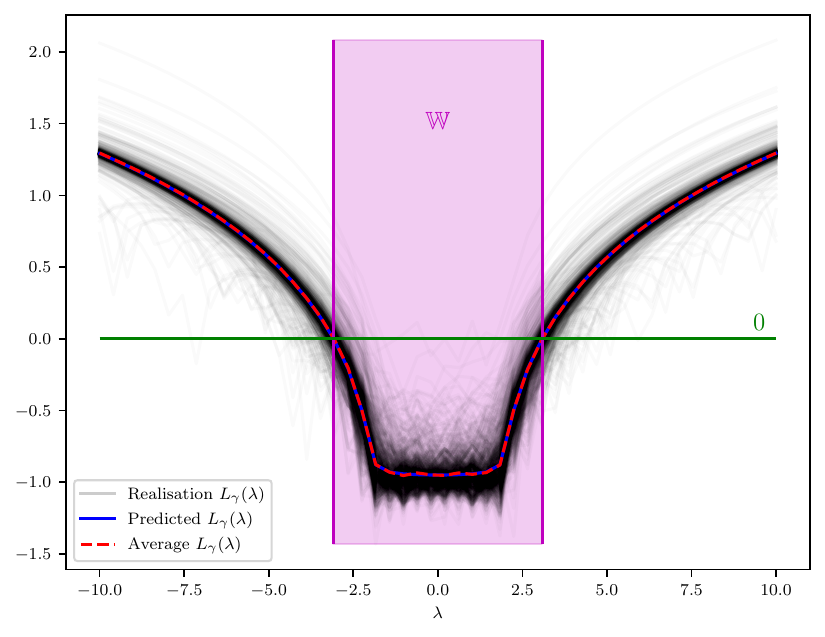}
    \caption{$s=0.1$}
    \label{fig: Lyap_many 0.1}
    \end{subfigure}\hfill
    \begin{subfigure}[t]{0.49\textwidth}
    \centering
\includegraphics[width=1\textwidth]{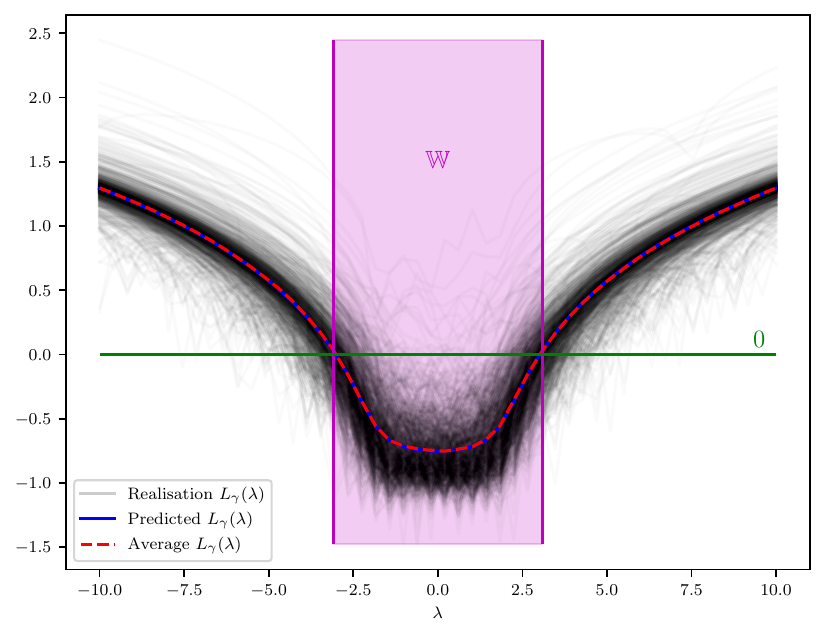}
    \caption{$s=0.5$}
    \label{fig: Lyap_many 0.5}
    \end{subfigure}\\
    \begin{subfigure}[t]{0.49\textwidth}
    \centering
\includegraphics[width=1\textwidth]{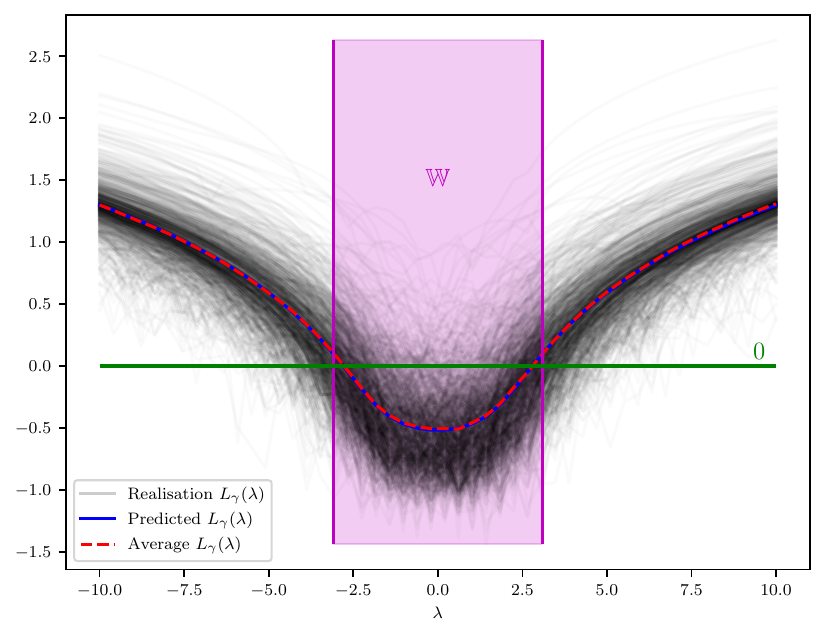}
    \caption{$s=1$}
    \label{fig: Lyap_many 1}
    \end{subfigure}\hfill
    \begin{subfigure}[t]{0.49\textwidth}
    \centering
\includegraphics[width=1\textwidth]{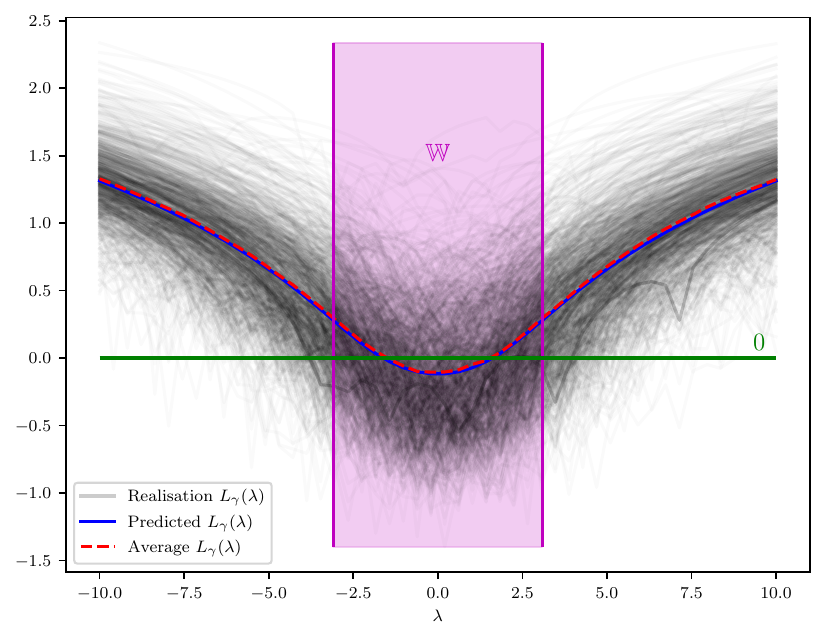}
    \caption{$s=2$}
    \label{fig: Lyap_many 2}
    \end{subfigure}\hfill
    \caption{Numerical simulation of $1000$ realizations of $\hhn^{\gamma=1}$ with potentials with Cauchy distributions of variable scale. On one hand one can appreciate the excellent agreement between the average computed Lyapunov exponent and the predicted value of \eqref{eq: layp herm to not herm} and \eqref{eq: L0 Lloyd}. On the other hand, the prediction power of \cref{thm: main Lloyd} remains accurate up to $s=1$. We remark that $\mathbb{P}[\vert X_{s=1}\vert > 1] = 0.5$ and $\mathbb{P}[\vert X_{s=2}\vert > 1] = 0.7$ for $X_{s}\sim$ Cauchy of scale $s$.}
    \label{fig: Layp_many}
\end{figure}
In \cref{fig: softargmax Lloyd} we look at $10^5$ realizations of a $N=100$ Hamiltonian $\hhn^1$  and consider the localization position of the eigenvectors depending on the position of the eigenvalues. To do this we compute the $\operatorname{softargmax}_N(x)$ defined as 
\begin{align*}
    \operatorname{softargmax}_\beta(x)=\sum_{i=0}^{N-1}
i \;
\frac{\exp\!\left(\beta (x_i - x_{\max})\right)}
{\sum_{j=0}^{N-1}\exp\!\left(\beta (x_j - x_{\max})\right)},
\qquad
x_{\max} = \max_j x_j
\end{align*}
of the eigenvectors. The phase transition is again apparent and in line with the findings of \cref{thm: main Lloyd}: while the eigenvalues lay within $\mathbb{W}$ the eigenvectors are localized at the edge of the system; once the eigenvalues lie outside of $\mathbb{W}$ Anderson localization takes over and there is no preferred localization position. 
\begin{figure}
    \includegraphics[width=0.5\textwidth]{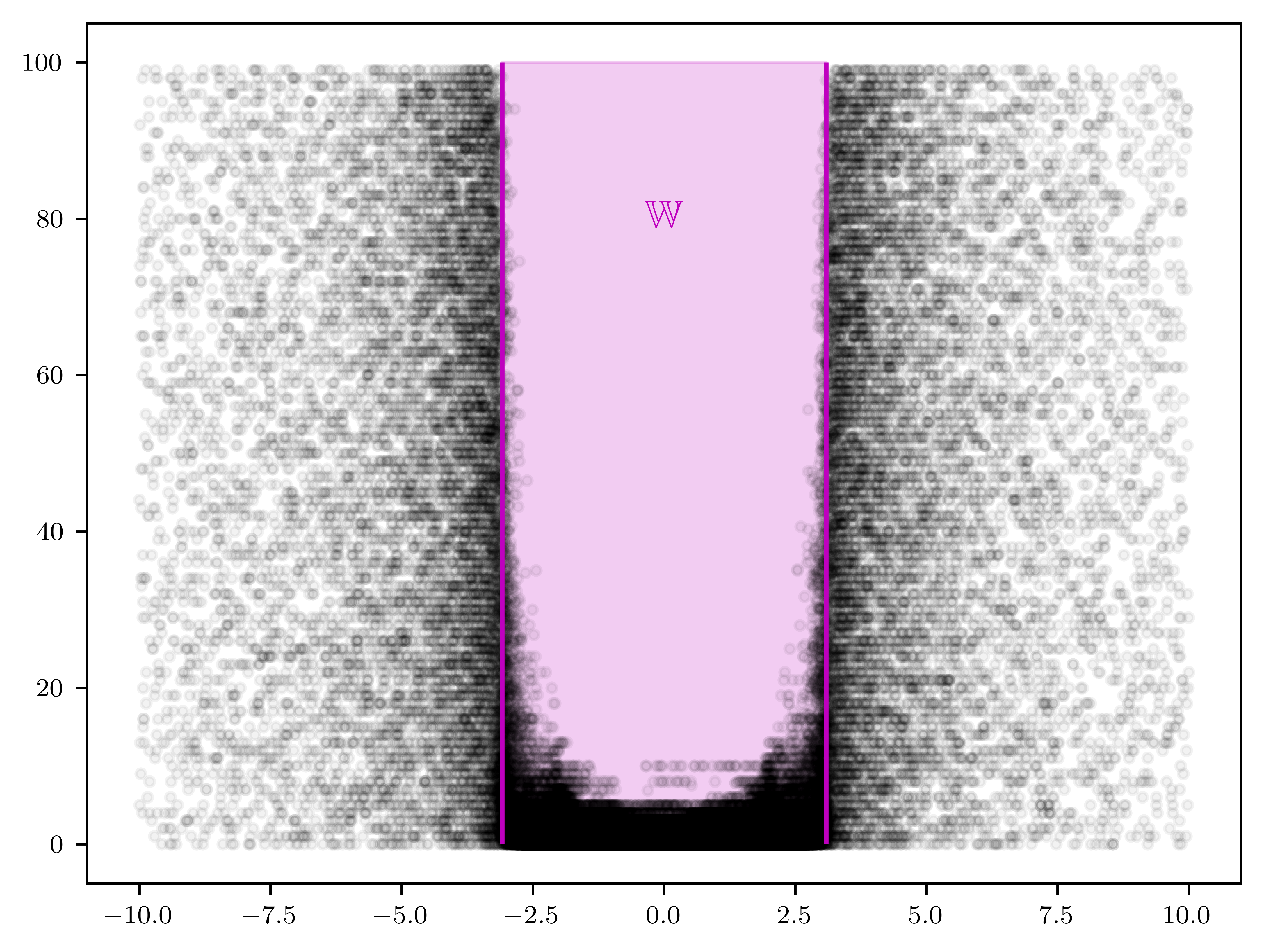}
    \caption{Grey dots represent pairs $(\lambda, \operatorname{softargmax}(v))$ where $(\lambda, v)$ is an eigenpair of one of 1000 samples of $\hhn^1$ for a Cauchy distribution of scale $s=0.1$. $\mathbb{W}$ represent the topological region associated to the deterministic $\hhn^1$. The transition from condensation to localization in the bulk is observed.}
    \label{fig: softargmax Lloyd}
\end{figure}
\section{General asymptotic results}\label{sec:gen asymptotic}
In this section we are going to lose the distribution assumption on the random potentials. As we have seen from \eqref{eq: layp herm to not herm} much can be said once the Hermitian case is understood, so we first summarize some known results.
\subsection{Weakly disordered Schrödinger model}
The theory of weakly disordered models is best developed for the Hermitian Schrödinger model
\begin{align}\label{eq: Schrodinger ham}
    H = \sum_i \vert i+1\rangle\langle i\vert + \vert i\rangle\langle i+1\vert + V_i \vert i\rangle\langle i\vert
\end{align}
where $V_i$ are the on-site potentials. We follow \cite[1.C, 6.D]{pastur.figotin1992Spectra} in stating now two definitions which give the general framework in which our main result holds.

\begin{definition}[Homogeneous Markov process]\label{def:hom_markov_proc}
    A real-valued random process $q(x)$ for $x\in\R$ is called \emph{homogeneous Markov process} if there is, on the set $\Omega$ of functions $f:\R\to\R$, a family of measures $\mu_{x,q}$ such that
    \begin{enumerate}
        \item if $\mathcal{F}_x^y$ denotes the $\sigma$-algebra generated by the random variables $\{q(z) : x \leqslant z \leqslant y\}$, then $\mu_{x,q}$ is a probability measure on the measure space $(\Omega, \mathcal{F}_x^\infty)$, and $\mu_{x,q}\{q(x) = q\} = 1$;
        \item for any $y \geqslant x \geqslant 0$ and $Q \in \mathcal{B}$ the functions
        \begin{align*}
            P(x, q; y, Q) = \mu_{x,q}\{q(y) \in Q\}
        \end{align*} is $\mathcal{B}(\R)$-measurable with respect to $q$;
        \item if $P\{\cdot \mid \mathcal{F}_1\}$
        denotes the conditional probability with respect to a fixed $\sigma$-subalgebra $\mathcal{F}_1$, we have, for any $z \geqslant y \geqslant x$ and $Q \in \mathcal{B}$:
        \begin{align*}
            \mu_{x,q}(\{q(z) \in Q \mid \mathcal{F}_x^y\}) = \mu_{y,q(y)}(\{q(z) \in Q\});
        \end{align*}
        \item for any $z \geqslant 0, y \geqslant x$ and $Q \in \mathcal{B}$ we have 
        \begin{align*}
            \mu_{x,q}(\{q(y) \in Q\}) = \mu_{x+z,q}(\{q(y+z) \in Q\}).
        \end{align*}
    \end{enumerate}
\end{definition}

\begin{definition}[Brownian motion model]\label{def:bm_model}
    Let $X(x)$ for $x\in\R$ be a homogenous Markov process on a compact $n$-dimensional Riemannian manifold $K$ and $Q:K\to\R$ a smooth function for which there exists a $k_0$ such that for any $X\in K$ there exists a $k\leq k_0$ so that $d^kQ(X)\neq 0$. Assume furthermore that 
    \begin{align*}
        \min_{X\in K} Q(X)=0,\quad \max_{X\in K}Q(X)=1.
    \end{align*}
    Then $q(x) \coloneqq Q(X(x))$ is called \emph{Brownian motion model}.
    
\end{definition}

To extricate the rather complex entanglement of conditions of \cref{def:hom_markov_proc,def:bm_model} we remark that easy examples as taking $q(x)$ to be \emph{i.i.d.} uniform random variables rescaled to $[0,1]$ satisfies \cref{def:bm_model}. The same holds for a truncated normal distribution.

Given these construct we can state the key building block of our final result as presented in \cite[§14.B]{pastur.figotin1992Spectra}.

\begin{proposition}[\citeauthor{pastur.figotin1992Spectra}]\label{thm: past fig}
    Let $H$ be the Schrödinger Hamiltonian \eqref{eq: Schrodinger ham} where $V_i = Q(X(i))$ is a Brownian motion model as in \cref{def:bm_model}, let $g=\sup_{X\in K}\vert Q(X)\vert$, and assume that $\mathbb{E}[Q(X(0))]=0$. Then for any $\lambda > 0$,
    \begin{align}
        L(\lambda) = \frac{1}{8\lambda}\hat{b}(2\sqrt{\lambda}) + \mathcal{O}(g^3)\quad\text{as}\quad g\to0.\label{eq: lyap result past fig}
    \end{align}
    Here
    $$
    b(x) \coloneqq \mathbb{E}[Q(X(0))Q(X(x))]
    $$
    is the correlation function of the process $Q(X(x))$, and $\hat{b}(x)$ is its Fourier transform.
\end{proposition}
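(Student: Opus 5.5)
The plan is to prove the statement in the continuum setting it is really formulated in. The appearance of $2\sqrt{\lambda}$ and the restriction $\lambda>0$ indicate the one-dimensional operator $-\frac{\dd^2}{\dd x^2}+q(x)$ at energy $\lambda=k^2$, with $q=Q(X(x))$ and $\vert q\vert\le g$. We then pass to Prüfer (phase--amplitude) variables: a solution of $-\psi''+q\psi=\lambda\psi$ is written as $\psi=r\sin\theta$, $\psi'=kr\cos\theta$. This gives the system
\begin{align*}
    \theta' = k-\frac{q(x)}{k}\sin^2\theta,\qquad (\log r)' = \frac{q(x)}{2k}\sin 2\theta .
\end{align*}
Since the transfer matrix norm grows like $r$, we have $L(\lambda)=\lim_{x\to\infty}\frac1x\log r(x)$ almost surely. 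By the ergodic theorem this equals $\frac{1}{2k}\,\mathbb{E}_\nu[Q(X)\sin 2\theta]$, where $\nu$ is the invariant measure of the joint Markov process $(X(x),\theta(x))$ on $K\times S^1$.

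The core of the argument is to determine $\nu$ perturbatively in $g$.

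1. Write the generator of the joint process as $\mathcal{G}=\mathcal{A}+k\,\partial_\theta-\frac{Q}{k}\sin^2\theta\,\partial_\theta$, where $\mathcal{A}$ is the generator of $X$ on $K$.
2. At $g=0$ the invariant density is $\nu_0=\mu\otimes\frac{\dd\theta}{\pi}$, with $\mu$ the invariant measure of $X$.
3. Expand $\nu=\nu_0+\nu_1+\nu_2+\dots$ with $\nu_j=\mathcal{O}(g^j)$. The zeroth-order term in $L$ is $\frac{1}{2k}\mathbb{E}[Q]\cdot\int\sin2\theta\,\frac{\dd\theta}{\pi}=0$, using the assumption $\mathbb{E}[Q(X(0))]=0$.
4. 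The first correction $\nu_1$ solves $\mathcal{G}_0^*\nu_1=\partial_\theta\bigl(\frac{Q}{k}\sin^2\theta\,\nu_0\bigr)$. We solve it Fourier-mode by mode in $\theta$: the $e^{\pm2\i\theta}$ modes require inverting $\mathcal{A}^*\pm 2\i k$ on mean-zero functions, which can be written as $\int_0^\infty e^{\pm 2\i k y}e^{y\mathcal{A}^*}\,\dd y$.
5. Inserting $\nu_1$ into $\frac{1}{2k}\mathbb{E}_\nu[Q\sin2\theta]$ gives an $\mathcal{O}(g^2)$ term of the form $\frac{1}{8k^2}\int_{-\infty}^{\infty} b(y)\cos(2ky)\,\dd y$, using $\mathbb{E}_\mu[Q\,e^{y\mathcal{A}}Q]=b(y)$ and symmetry of $b$. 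This is exactly $\frac{1}{8\lambda}\hat b(2\sqrt\lambda)$.

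As a heuristic cross-check before doing this rigorously, we iterate the phase equation once: $\theta\approx kx+\theta_0+\mathcal{O}(g)$. Substituting into $(\log r)'$ and averaging over $\theta_0$, the first-order term vanishes and the second-order term produces the same double integral $\frac{1}{4k^2}\int_0^\infty b(y)\cos(2ky)\,\dd y$ per unit length. This fixes the constant.

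The main obstacle is justifying the expansion with a remainder that is genuinely $\mathcal{O}(g^3)$. This requires:
\begin{itemize}
    \item existence and uniqueness of $\nu$;
    \item a uniform spectral gap (or at least invertibility) of $\mathcal{G}$ on mean-zero functions, with bounds independent of small $g$.
\end{itemize}
Here we would use the hypotheses of \cref{def:bm_model}. Compactness of $K$ and the condition that some $d^kQ\neq0$ with $k\le k_0$ give a Hörmander-type bracket condition, so $\mathcal{G}$ is hypoelliptic and the joint process is uniquely ergodic with smooth density. Combined with the gap of $\mathcal{A}$ on $K$, this allows the Neumann series in $g$ to be controlled. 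The third-order coefficient is then bounded by $g^3$ times norms of resolvents at $\pm2\i k$, which are finite for fixed $\lambda>0$; this yields \eqref{eq: lyap result past fig}.
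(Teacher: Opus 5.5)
You are right to read the statement in the continuum, and in fact you have to: for the lattice operator \eqref{eq: Schrodinger ham} the formula is false. For instance, for $\lambda>2$ one has $L(\lambda)\to\arcosh(\lambda/2)>0$ as $g\to0$, while the right-hand side of \eqref{eq: lyap result past fig} is $\mathcal{O}(g^2)$. The formula is the weak-coupling asymptotics for $-\frac{\mathrm{d}^2}{\mathrm{d}x^2}+gQ_1(X(x))$, with $X$ Brownian motion on $K$ and $Q_1$ fixed and centred. The normalisation $\min Q=0$, $\max Q=1$ of \cref{def:bm_model} cannot coexist with $\mathbb{E}[Q]=0$ and $g\to0$ anyway.

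The paper gives no argument beyond citing \S14.B. Your route (Pr\"ufer phase, invariant density of the diffusion $(X,\theta)$, expansion in $g$) is the standard one, and your coefficient $\frac{1}{4k^2}\int_0^\infty b(y)\cos(2ky)\,\mathrm{d}y$ is right. In steps 4--5 two sign slips happen to cancel. Since $\mathcal{G}_0^*=\mathcal{A}^*-k\partial_\theta$, the first-order equation is $\mathcal{G}_0^*\nu_1=-\partial_\theta(\frac{Q}{k}\sin^2\theta\,\nu_0)$. The $e^{\pm2\i\theta}$ mode then needs $(\mathcal{A}^*\mp2\i k)^{-1}=-\int_0^\infty e^{\mp2\i ky}e^{y\mathcal{A}^*}\,\mathrm{d}y$.

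The genuine gap is the $\mathcal{O}(g^3)$ remainder. Bounding the third-order coefficient is not enough; you need the whole tail. Because $\sin^2\theta$ couples $e^{2\i m\theta}$ to $e^{2\i(m\pm1)\theta}$, the $j$-th correction involves $(\mathcal{A}^*-2\i mk)^{-1}$ for all $0<\vert m\vert\le j$, each preceded by a factor $2\i m$ from $\partial_\theta$. Hypoellipticity and the spectral gap of $\mathcal{A}$ do not control this: a resolvent bound uniform in $m$ only gives $\Vert\nu_j\Vert\lesssim j!\,(Cg)^j$.

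What does control it is that the free rotation absorbs the derivative. Since $\mathcal{A}$ is self-adjoint and nonpositive, $\Vert(\mathcal{A}^*-2\i mk)^{-1}\Vert\le(2\vert m\vert k)^{-1}$ for $m\neq0$. Hence $\Vert\mathcal{G}_0^{*-1}\partial_\theta\Vert_{L^2\to L^2}\le 1/k$ and $\Vert\mathcal{G}_0^{*-1}\mathcal{G}_1^*\Vert\le g/\lambda$, where $\mathcal{G}_1^*=\partial_\theta(\frac{Q}{k}\sin^2\theta\,\cdot)$. The $\theta$-independent mode is never inverted, because $\mathcal{G}_1^*$ takes values in $\theta$-derivatives.

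For $g<\lambda$ this gives the following.
\begin{itemize}
\item The series $\nu=\sum_{j\ge0}(-\mathcal{G}_0^{*-1}\mathcal{G}_1^*)^j\nu_0$ converges in $L^2(\mu\otimes\mathrm{d}\theta)$.
\item The $L^2$-kernel of $\mathcal{G}^*$ is one-dimensional, so $\nu$ is the unique invariant measure. Any invariant measure is smooth, since the $\theta$-drift $k-\frac{Q}{k}\sin^2\theta\ge k-g/k$ never vanishes; this gives H\"ormander's condition without using derivatives of $Q$.
\item $\Vert\nu-\nu_0-\nu_1\Vert=\mathcal{O}((g/\lambda)^2)$, and pairing with $\frac{1}{2k}Q\sin2\theta=\mathcal{O}(g)$ yields the $\mathcal{O}(g^3)$ remainder.
\end{itemize}
This is exactly the $g$-uniform invertibility you say you need, but it comes from the rotation term, not from the bracket condition or the gap of $\mathcal{A}$.
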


\subsection{Phase transition}
The following theorem is our main result of this section and shows the link of the phase transition to the topological invariant.
\begin{theorem}
    Let $\hhn$ be as in \eqref{eq: HN Hamiltonian} with $V_i$ \emph{i.i.d.} random variables with $\E(V_0)=0$ and assume that $\var(V_0)=a\gamma$ for any $a>0$ and $\operatorname{ess\ sup}(V_0)\to 0$ as $\sigma\to 0$, then
    \begin{align}
        L_\gamma(\lambda) > 0 \Leftrightarrow \lambda \not\in \mathbb{W}
    \end{align}
    in the limit $\gamma \to 0$. The error is quadratic in $\gamma$ for $a=16$.
\end{theorem}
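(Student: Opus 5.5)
The plan is to reduce the non-Hermitian problem to the Hermitian Schrödinger model by the shift \eqref{eq: layp herm to not herm}, $L_\gamma(\lambda)=L_0(\lambda)-\gamma$. Then I will apply \cref{thm: past fig} to $L_0$ and compare the sign of $L_0(\lambda)-\gamma$ with membership of $\lambda$ in the ellipse $\mathbb{W}$, whose real vertices are $\pm2\cosh\gamma$. Since $\lambda$ is real, $\lambda\in\mathbb{W}$ reduces to $|\lambda|<2\cosh\gamma=2+\gamma^2+\mathcal{O}(\gamma^4)$. As $\gamma\to0$ this is $|\lambda|<2$ up to a quadratically small boundary layer. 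The claim therefore splits into two cases: in the bulk $|\lambda|<2$ we need $L_0(\lambda)<\gamma$, and outside $|\lambda|>2$ we need $L_0(\lambda)>\gamma$.

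For $|\lambda|>2$, $\lambda$ lies outside the spectrum of the free Laplacian. There $L_0(\lambda)=\arcosh(|\lambda|/2)+o(1)$ as the disorder vanishes, by continuity of the Lyapunov exponent in the potential; for instance, the perturbative expansion in the spirit of \eqref{eq: Lyap lloyd in out} gives a correction of order $\var(V_0)=a\gamma$. This quantity is positive and of order one at fixed $\lambda$, so it beats $\gamma$. More precisely, $L_0(\lambda)-\gamma>0$ as soon as $\arcosh(|\lambda|/2)>\gamma(1+\mathcal{O}(a))$. That inequality fails only when $|\lambda|-2=\mathcal{O}(\gamma^2)$, which again lies inside the thickened boundary.

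For $|\lambda|<2$, I will use \cref{thm: past fig}. First the spectral variable must be translated: the proposition is stated for $-\Delta+V$ with energy $\lambda>0$ in the continuum normalisation. For the discrete model I will use its lattice analogue, writing $\lambda=2\cos k$. For i.i.d. potentials the correlation $b$ is a delta at $0$ with weight $\var(V_0)$, so $\hat b\equiv\var(V_0)$ is flat. The weak-disorder formula then gives
\begin{align*}
L_0(\lambda)=\frac{\var(V_0)}{2(4-\lambda^2)}+\mathcal{O}(g^3)=\frac{a\gamma}{2(4-\lambda^2)}+\mathcal{O}(g^3),
\end{align*}
which is the standard Thouless perturbative result. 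The hypothesis that $\operatorname{ess\,sup}V_0\to0$ is exactly what makes $g\to0$ so that the proposition applies. Here $V_0$ is rescaled as $Q(X(i))$, and shifting $Q$ to mean zero is harmless.

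The comparison $L_0(\lambda)<\gamma$ in the bulk then reads $a/(2(4-\lambda^2))<1$. Here lies the main obstacle: this inequality is not uniform in $\lambda$, since it fails as $|\lambda|\to2$ and, for large $a$, on a whole interval. I therefore expect the correct reading of the statement to be the following. The transition point $\lambda_c$ solving $L_0(\lambda_c)=\gamma$ satisfies $4-\lambda_c^2=a/2$ at leading order. This coincides with the boundary $2\cosh\gamma$ of $\mathbb{W}$ only up to an error controlled by $a$, and the specific choice $a=16$ should be the one that makes the leading constants match. Using the continuum normalisation $\frac{1}{8\lambda}\hat b$ of \eqref{eq: lyap result past fig}, I would recompute $\lambda_c$ and check that the mismatch $|\lambda_c-2\cosh\gamma|$ is $\mathcal{O}(\gamma^2)$ exactly when $a=16$. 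I would treat the $\mathcal{O}(g^3)=\mathcal{O}(\gamma^{3/2})$ remainder from \cref{thm: past fig} as absorbed into that error. Pinning down this normalisation bookkeeping, together with the uniformity of \cref{thm: past fig} near the band edge where $4-\lambda^2\to0$, is the step I expect to be delicate. Near the edge I would fall back on the band-edge scaling $L_0\sim g^{2/3}$ to confirm that the sign change still occurs within an $o(1)$ neighbourhood of $\pm2$.
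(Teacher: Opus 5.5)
There is a genuine gap. It sits in the bulk step you flag as delicate, and the repair you announce cannot work.

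With the lattice asymptotics you write down (the standard ones for \eqref{eq: Schrodinger ham}), $L_\gamma(\lambda)=\gamma\bigl(\frac{a}{2(4-\lambda^2)}-1\bigr)+\mathcal{O}(g^3)$ for fixed $|\lambda|<2$, so the sign change sits at $\lambda_c^2=4-a/2$. This point does not depend on $\gamma$. Its distance to $\partial\mathbb{W}\cap\R=\{\pm2\cosh\gamma\}$ is about $a/8$ for small $a$, not $\mathcal{O}(\gamma^2)$. For $a=16$ there is no sign change at all: $\frac{16}{2(4-\lambda^2)}-1\geq 1$ gives $L_\gamma\geq\gamma+o(\gamma)>0$ on all of $\mathbb{W}\cap\R$. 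So your planned check (``mismatch is $\mathcal{O}(\gamma^2)$ exactly when $a=16$'') fails with your own formula. Pursued consistently, your route only gives the equivalence away from an $\mathcal{O}(a)$-neighbourhood of $\partial\mathbb{W}$, i.e.\ after a further limit $a\to0$. Separately, the $\mathcal{O}(g^3)=\mathcal{O}(\gamma^{3/2})$ remainder cannot be ``absorbed''. The slope of $L_\gamma$ in $\lambda$ is only of order $a\gamma$, so this remainder can move the zero by $\mathcal{O}(\gamma^{1/2})$, far more than $\gamma^2$.

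The value $a=16$ comes from the paper's argument, which is what your last step falls back on. The paper inserts $\hat b\equiv\sigma^2$ into the continuum formula \eqref{eq: lyap result past fig}, reading $\lambda$ as the lattice energy, to get $L_0(\lambda)=\sigma^2/(8\lambda)$. Then $L_\gamma(2\cosh\gamma)=\gamma\bigl(\frac{a}{16\cosh\gamma}-1\bigr)$, and for $a=16$ the Taylor expansion of $x/\cosh x$ gives $\gamma/\cosh\gamma-\gamma=\mathcal{O}(\gamma^3)$. You cannot graft this onto the rest of your proposal, for two reasons.
\begin{enumerate}
\item Your lattice computation rules out identifying the continuum energy with the lattice energy. \eqref{eq: lyap result past fig} is the band-edge limit of $\sigma^2/(8\sin^2k)$, where the continuum energy $k^2$ corresponds to $2-|\lambda|$, not to $\lambda$.
\item $\sigma^2/(8\lambda)$ is decreasing in $\lambda$, so it yields $L_\gamma>0$ on $(0,2)$ and $L_\gamma<0$ for $\lambda>2$. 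That is the reverse of the claimed equivalence, and it contradicts your (correct) outside-band observation $L_0(\lambda)\approx\arcosh(|\lambda|/2)>0$.
\end{enumerate}
So the missing step is not normalisation bookkeeping: no consistent normalisation gives both your outside-band conclusion and an $\mathcal{O}(\gamma^2)$ match at $a=16$. To turn the proposal into a proof you must either weaken the statement (a thickened boundary of width $\mathcal{O}(a)$, or the limit $a\to0$), or present the paper's substitution explicitly as a formal heuristic rather than as a consequence of \cref{thm: past fig}.
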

\begin{proof}
    Let $\sigma^2=\var(V_0)$. Then, given the \emph{i.i.d.} assumption on $V_i$ it is straightforward to see $\E[V_0 V_i]=\sigma^2\delta_{i,0}$ and consequently $\hat{b}(i)=\sigma^2$. Using \eqref{eq: lyap result past fig} from \cref{thm: past fig} we conclude
    $$
    L_0(\lambda) = \frac{\sigma^2}{8\lambda} + \mathcal{O}(\operatorname{ess\ sup}(V_0)).
    $$
    Finally \eqref{eq: layp herm to not herm} together with a Taylor expansion of $x/\cosh(x)$ yields the result.
\end{proof}
\subsection{Numerical simulations}
We consider as easiest example a uniform random variable on the interval $[-\sigma\sqrt{3},\sigma\sqrt{3}]$ for some variance $\sigma^2>0$. In \cref{fig:convergence} we look at the average mean error that is the mean of 
\begin{align}
    \vert L_\gamma(2\cosh(\gamma)) \vert
    \label{eq: mean error}
\end{align}
over many realizations of $\hhn$ for a rather large system of $N=10^5$.

\begin{figure}
    \includegraphics[width=0.5\textwidth]{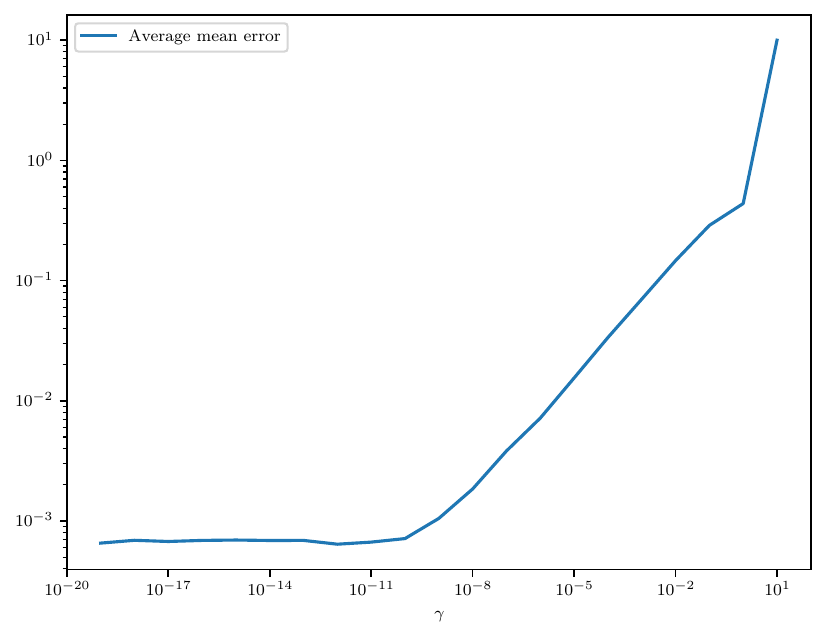}
    \caption{We show the convergence of average mean error as in \eqref{eq: mean error} for a system of $N=10^5$ over $10$ runs. The stagnation of the error is due to the finite size of the system.}
    \label{fig:convergence}
\end{figure}

We remark that the poor convergence after $10^{-10}$ is due to the finiteness of the system as such small Lyapunov exponent can only be captured on systems of size $1/L_\gamma(\lambda)$.

We also point out that this result appears to be much weaker than the one in \cref{sec:Lloyd model}, but it sheds light on the non-Hermitian to Hermitian transition which is fundamentally different in nature as we mentioned in the introduction and will further expand in the following conclusions.

\section{Conclusion}
In this paper we have provided a rigorous characterization of the transition between two fundamentally distinct localization mechanisms in disordered non-Hermitian systems: the non-Hermitian skin effect and Anderson localization. We have done so in two very different regimes. The Lloyd model is very precise and offers a unique way to prove this statement exactly. The weak-disorder limit shows us how the transition from non-Hermitian to Hermitian happens in a rather general framework. In particular this last regime handles the fundamental difference that the NHSE introduces to disordered systems: a minimal non-zero disorder strength is needed to generate Anderson localization.

Through the relation of the Lyapunov exponent $L_\gamma(\lambda) = L_0(\lambda) - \gamma$, the non-Hermitian problem reduces to the Hermitian one up to a constant shift, and the topological winding region $\mathbb{W}$ of the unperturbed system emerges as the \emph{a priori} spectral discriminant: eigenstates associated to frequencies inside $\mathbb{W}$ are skin-effect localized at the boundary, while those outside are Anderson localized in the bulk.

Several directions for future research present themselves naturally. First, an extension to two and higher-dimensional non-Hermitian systems, where the topology of $\mathbb{W}$ becomes richer and the interplay with Anderson localization remains largely uncharted. Second, a study of the boundary of $\mathbb{W}$: at these critical frequencies the system is expected to exhibit delocalized or critical states, and quantifying their fractal or multifractal structure seems worth researching. Third, the extension from \emph{i.i.d.} to correlated disorder originally considered in \cite{ammari.barandun.ea2024Stability}: numerical results show exactly the same behaviors.

\section{Acknowledgements}
The author thanks the Swiss National Science Foundation (SNSF) for its support through the grant \href{https://data.snf.ch/grants/grant/235080}{235080}.\\
The author thanks Prof. H. Ammari, Prof. S. Mayboroda, Prof. S. G. Johnson and A. Uhlmann for insightful discussions and Prof. C. Bordenave for precious advice on an earlier attempt to this problem and hosting his research visit.
\printbibliography

\end{document}